\newtheorem{theorem}{Theorem}
\newtheorem{definition}{Definition}
\newtheorem{proposition}{Proposition}
\newtheorem{example}{Example}
\newtheorem{remark}{Remark}
\newtheorem{lemma}{Lemma}
\theoremstyle{remark}
\newtheorem*{sketchProof}{Sketch of Proof}
\newcommand{\tn}[1]{\textnormal{#1}}        
\newcommand{\TA}{\ensuremath{\mathcal{A}}\xspace}
\newcommand{\TAsem}{\ensuremath{\llbracket\TA\rrbracket}\xspace}
\newcommand{\dynSys}{\ensuremath{\Gamma}\xspace}
\newcommand{\vectField}{\ensuremath{f}\xspace}
\newcommand{\sts}{\ensuremath{X}\xspace}
\newcommand{\stsIni}{\ensuremath{X_{0}}\xspace}
\newcommand{\dynSysAll}{\ensuremath{\Gamma=(\sts,\vectField)}\xspace}
\newcommand{\patitS}{\ensuremath{E(\mathcal{S})}\xspace}
\title{Completeness of Lyapunov Abstraction}
\author{Rafael Wisniewski\thanks{R. Wisniewski is supported by the EDGE (Efficient Distribution of Green Energy) project.}
\institute{Section of Automation \& Control\\
Aalborg University, Denmark}
\email{raf@es.aau.dk}
\and
Christoffer Sloth\thanks{C. Sloth is supported by MT-LAB, a VKR Centre of Excellence.}
\institute{Section of Automation \& Control\\
Aalborg University, Denmark}
\email{ces@es.aau.dk}
}
\begin{document}
\maketitle

\begin{abstract}
In this work, we continue our study on discrete abstractions of dynamical systems. To this end, we use a family of partitioning functions to generate an abstraction. The intersection of sub-level sets of the partitioning functions defines cells, which are regarded as discrete objects. The union of cells makes up the state space of the dynamical systems. Our construction gives rise to a combinatorial object - a timed automaton.  We examine sound and complete abstractions. An abstraction is said to be sound when the flow of the time automata covers the flow lines of the dynamical systems. If the dynamics of the dynamical system and the time automaton are equivalent, the abstraction is complete.

The commonly accepted paradigm for partitioning functions is that they ought to be transversal to the studied vector field. We show that there is no complete partitioning with transversal functions, even for particular dynamical systems whose critical sets are isolated critical points. Therefore, we allow the directional derivative along the vector field to be non-positive in this work. This considerably complicates the abstraction technique. For understanding dynamical systems, it is vital to study stable and unstable manifolds and their intersections. These objects appear naturally in this work. Indeed, we show that for an abstraction to be complete, the set of critical points of an abstraction function shall contain either the stable or unstable manifold of the dynamical system.
\end{abstract}

\section{Introduction}
Formal verification is used to thoroughly analyze dynamical systems. To enable formal verification of a dynamical system, the system can be abstracted by a model of reduced complexity - often a discrete model. A discrete abstraction of a dynamical system is generated by partitioning its state space into cells and determining transitions between the cells. The generation of the abstraction is quite difficult if the behaviors of the discrete and dynamical models should be equivalent (the abstraction is complete). Therefore, most works make the implicit assumption that the studied dynamical system has only one equilibrium point, and that the equilibrium point is stable. This assumption simplifies the abstraction procedure considerably, and allows the identification of partitioning functions that are transversal to solution trajectories. However, what happens if this assumption is relaxed?

Solution trajectories of a dynamical system from \cite{springerlink:10.1007/s00236-006-0037-5} are illustrated in Figure~\ref{fig:torus}. The trajectories live on the torus (the torus is obtained by identifying opposing facets with each other).
\begin{figure}[!htb]
    \centering
       \includegraphics[scale=.8]{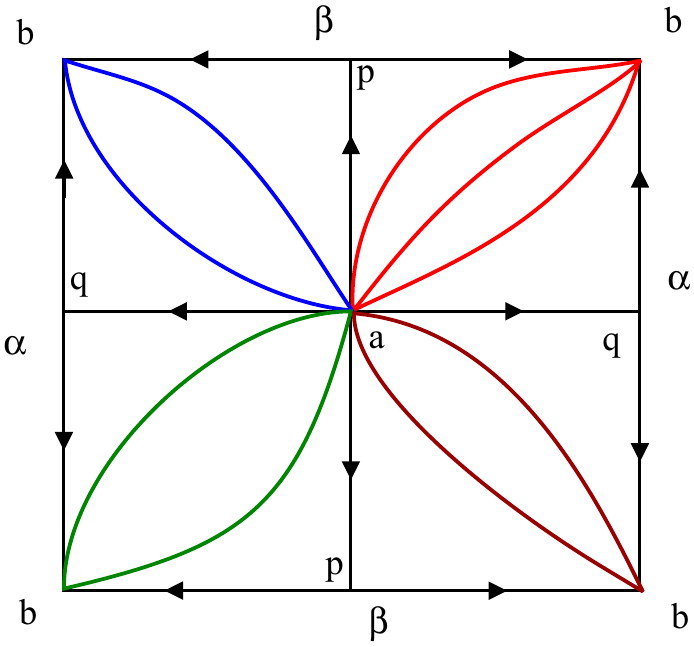}
    \caption{Flow lines of a dynamical system on a torus. The line segments between identical letters are identified to form the torus.\label{fig:torus}}
\end{figure}
The system has a stable equilibrium point ($b$), an unstable equilibrium point ($a$), and two saddle points ($\alpha$ and $\beta$). A complete abstraction cannot be generated for such system without choosing a partitioning function that has a level set that includes either the stable manifold or the unstable manifold, i.e. the vector field must be tangential to the partitioning function on this set. Furthermore, it implies that the partitioning functions must identify the stable and unstable manifolds; however, the identification of the stable and unstable manifolds is known to be a very hard problem. This implies that we cannot hope to derive algorithms for finding complete abstractions of general dynamical systems; hence, we conclude that only sound abstractions can be generated in general.

Numerous indirect verification methods for dynamical systems have been developed, i.e., verification methods based on abstracting a considered system by a model of reduced complexity, while preserving certain properties of the original system. To relate the considered system with an abstraction of it, the notions of soundness and completeness are used. Roughly speaking, the reachable set of a sound (complete) abstraction includes (equals) the reachable set of the original system. Remark that the original system and the abstraction may be of different categories. Therefore, the relation between their reachable sets should be appropriately defined, as a solution trajectory of a dynamical system cannot be directly compared to a run of an automaton. This is clarified in Section~\ref{sec:abstractions_of_dynamical_systems}.

Discrete abstractions are generated in \cite{761977} for hybrid systems and in \cite{springerlink:10.1007/s10703-007-0044-3} for continuous systems. The method developed in \cite{springerlink:10.1007/s10703-007-0044-3}, is in the class of sign based abstractions. This method relies on partitioning the state space using level sets of multiple functions and subsequently generating the transitions in the abstract model based on the Lie derivative of the partitioning function with respect to the vector field. None of the above methods generate complete abstractions; hence, the dynamics of the abstractions are only known to include the dynamics of the original system. This implies that the result of the verification may be very conservative. Other methods partition the state space by identifying positively invariant sets. In \cite{Abate20091601}, the generation of box shaped invariant sets is treated and is motivated by examples from biology.
Tools have also been developed for formal verification. An example is SpaceEx \cite{springerlink:10.1007/978-3-642-22110-1_30} that is based on finding the reachable states using support functions. This tool has good scalability properties, but does not rely on complete abstractions.

Our previous work \cite{Complete_Abstractions_of_Dynamical_Systems_by_Timed_Automata}, considers transversal functions in the partitioning of the state space. This commonly accepted transversality condition implies that there is no complete partitioning, even for a dynamical system with critical set being isolated critical points (e.g. a single saddle point).
A similar issue was detected in the abstraction of mechanical systems \cite{Abstractions_for_Mechanical_Systems}; however, the issue was resolved by only requiring the Lie derivative of the partitioning function along the vector field to be non-positive.

In this paper, we present work in progress on generating complete abstractions of more general dynamical systems. We provide simple illustrative examples that highlight the difficulties faced in the generation of abstractions. Additionally, we show that the existing conditions for generating transversal abstractions cannot trivially be extended. Throughout the paper, we consider dynamical systems without limit cycles.

The paper is organized as follows. Section~\ref{sec:preliminaries} contains preliminary definitions, Section~\ref{sec:abstractions_of_dynamical_systems} explains how a dynamical system and an abstract model are related, and Section~\ref{sec:partitioning_the_state_space} explains how the state space is partitioned using level sets of functions. Section~\ref{sec:obtaining_TA} describes how a timed automaton can be generated from the partitioning and shows that the set of critical points of a complete abstraction function contains either the stable or unstable manifold of the dynamical system, and finally Section~\ref{sec:conclusion} comprises conclusions.

\subsection{Notation}
The set $\{1,\dots,k\}$ is denoted by $\bm{k}$. $B^{A}$ is the set of maps $A\rightarrow B$. The power set of $A$ is denoted by $2^{A}$. The cardinality of the set $A$ is denoted by $|A|$. We consider the Euclidean space $(\mathds{R}^{n},\langle,\rangle)$, where $\langle,\rangle$ is the standard scalar product. $\mathds N = \{1,2, \dots\}$ is the set of natural numbers, and $\mathds Z=\{\dots,-1,0,1,\dots\}$ is the set of integers.

\section{Preliminaries}\label{sec:preliminaries}
The purpose of this section is to provide definitions related to dynamical systems and timed automata. Further details on timed automata are found in Appendix~\ref{app:ta}.

\subsection{Dynamical Systems}
A dynamical system \dynSysAll has state space $\sts\subseteq\mathds{R}^{n}$ and dynamics described by a system of ordinary differential equations $\vectField:\sts\rightarrow\mathds{R}^{n}$
\begin{align}
\dot{x}&=\vectField(x).\label{eqn:general_system}
\end{align}
We assume that \vectField has only non-degenerate singularities \cite{Introduction_to_Smooth_Manifolds} and no limit cycles.

The solution of \eqref{eqn:general_system} from an initial state $x_{0}\in \stsIni\subseteq \sts$ at time $t\geq0$ is described by the flow function $\phi_{\dynSys}:[0,\epsilon]\times X\rightarrow X$, $\epsilon>0$ satisfying
\begin{align}
\frac{d\phi_{\dynSys}(t,x_{0})}{dt}&=\vectField\left(\phi_{\dynSys}(t,x_{0})\right)\label{eqn:solution_of_auto_differential_equation}
\end{align}
for all $t\in[0,\epsilon]$ and $\phi_{\dynSys}(0,x_{0})=x_{0}$.

For a map $f: A \to B$, and a subset $C \subseteq A$, we define $f(C) \equiv \{f(x) |~x \in C \}$. Thus, the reachable set is defined as follows.
\begin{definition}[Reachable Set of Dynamical System]
The reachable set of a dynamical system \dynSys from a set of initial states $\stsIni\subseteq \sts$ on the time interval $[t_{1},t_{2}]$ is
\begin{align}
\phi_{\dynSys}([t_{1},t_{2}],X_{0}).
\end{align}
\end{definition}
A positive invariant set is defined in the following.
\begin{definition}[Positive Invariant Set]\label{def:positive_invariant_set}
Given a system $\Gamma=(X,f)$, a set $U\subseteq X$ is said to be positively invariant if for all $t\geq0$
\begin{align}
\phi_{\Gamma}(t,U)\subseteq U.
\end{align}
\end{definition}

\subsection{Timed Automata}
We use the notation of \cite{A_theory_of_timed_automata} in the definition of a timed automaton. Let $\Psi(C)$ be a set of clock constraints $\psi$ for a set of clocks $C$. This set contains all invariants and guards of the timed automaton, and is described by the following grammar
\begin{subequations}\label{eqn:clock_grammar}
\begin{align}
&\psi::=c\mathbf{\bowtie} k|\psi_{1}\bm{\wedge}\psi_{2},
\end{align}
where
\begin{align}
&c\in C,\, k\in\mathds{R}_{\geq0},\text{ and }\mathbf{\bowtie}\in\{\bm{\leq},\bm{<},\bm{=},\bm{>},\bm{\geq}\}.
\end{align}
\end{subequations}
Note that the clock constraint $k$ should be a rational number, but in this paper, no effort is made to convert the clock constraints into rational numbers. However, any real number can be approximated by a rational number with an arbitrary small error $\epsilon>0$. To make a clear distinction between syntax and semantics, the elements of $\mathbf{\bowtie}$ are bold to indicate that they are syntactic operations. 
\begin{definition}[Timed Automaton]
A timed automaton \TA is a tuple $(E,$ $E_{0},C, \Sigma, I, \Delta)$, where
\begin{itemize}
\item $E$ is a finite set of locations, and $E_{0}\subseteq E$ is the set of initial locations.
\item $C$ is a finite set of clocks.
\item $\Sigma$ is the alphabet.
\item $I:E\rightarrow\Psi(C)$ assigns invariants to locations.
\item $\Delta\subseteq E\times\Psi(C)\times\Sigma\times2^{C}\times E$ is a finite set of transition relations.  A transition relation is a tuple $(e,G_{e\rightarrow e'},\sigma,R_{e\rightarrow e'},e')$ assigning an edge between two locations, where $e$ is the source location and $e'$ is the destination location. $G_{e\rightarrow e'}\in\Psi(C)$ is the set of guards, $\sigma$ is a symbol in the alphabet $\Sigma$, and $R_{e\rightarrow e'}\subseteq C$ is a subset of clocks.
\end{itemize}
\end{definition}

A discrete flow map $\Phi_{\TA}:\mathds{R}_{\geq0}\times E_{0}\rightarrow 2^{E}$ is defined for the timed automata. Details on the flow map and its semantics are found in Appendix~\ref{app:ta}.

\section{Abstractions of Dynamical Systems}\label{sec:abstractions_of_dynamical_systems}
To evaluate the generated abstraction, it is necessary to define a relation between solution trajectories of a dynamical system and a timed automaton. We define an abstraction function, which associates subsets of the state space to locations of a timed automaton, which makes it possible to define sound and complete abstractions.

Let $\Lambda\subseteq\mathds{N}$ be an index set. An abstraction of the dynamical system \dynSysAll consists of a finite number of sets $ E \equiv \{e_{\lambda}|~\lambda \in \Lambda\}$ called cells. The cells cover the state space $X$
\[ X = \bigcup_{\lambda \in \Lambda} e_{\lambda}.\]

To the partition $E$, we associate an abstraction function, which to each point in the state space associates the cells that this point belongs to.

\begin{definition}[Abstraction Function]\label{def:abstraction_function}
Let $\Lambda\subseteq\mathds{N}$ be an index set and let $E \equiv \{e_{\lambda}|~\lambda \in \Lambda\}$ be a finite partition of the state space $\sts \subseteq \mathds{R}^{n}$. An abstraction function for $E$ is the multivalued function $\alpha_E: \sts \to 2^E$ defined by
\begin{align}
\alpha_{E}(x) \equiv \{e \in E|~x\in e\}.
\end{align}
\end{definition}

For a given dynamical system \dynSys, our aim is to simultaneously devise a partition $E$ of the state space \sts and create a timed automaton \TA with locations $E$ such that
\begin{enumerate}
\item The abstraction is \textbf{sound} on an interval $[t_1, t_2]$:
\[\alpha_E \circ \phi_{\Gamma} (t,X_0) \subseteq \Phi_{\TA} (t,\alpha_E(X_0)), \tn { for all }  t \in [t_1, t_2].\]
\item The abstraction is \textbf{complete} on an interval $[t_1, t_2]$:
\[\alpha_E \circ \phi_{\Gamma} (t,X_0) = \Phi_{\TA} (t,\alpha_E(X_0)) \tn { for all } t \in [t_1, t_2].\]
\end{enumerate}

\section{Partitioning the State Space}\label{sec:partitioning_the_state_space}
This section presents the method for partitioning the state space by functions. The partition of the state space is generated by intersecting sublevel sets of functions, and has two components: slices and cells. A slice is a sublevel set of one partitioning function, whereas a cell is a connected component of the intersection of sublevel sets of more functions.

Let $A\subseteq\mathds{R}^{n}$, then $\tn{cl}(A)$ denotes the closure of $A$. We define a slice as the set-difference of positively invariant sets.
\begin{definition}[Slice]\label{def:slice}
A nonempty set $S$ is a slice if there exist two open sets $A_{1}$ and $A_{2}$ such that
\begin{enumerate}
\item $A_{1}$ is a proper subset of $A_{2}$,
\item $A_{1}$ and $A_{2}$ are positively invariant, and
\item $S=\tn{cl}(A_{2}\backslash A_{1})$.
\end{enumerate}
\end{definition}
Since $A_{1}$ and $A_{2}$ are positively invariant sets, a trajectory initialized in $S$ can propagate to $A_{1}$, but no solution initialized in $A_{1}$ can propagate to $S$. We adopt the convention that $\emptyset$ is a positively invariant set of any dynamical systems.

%

To devise a partition of a state space, we need to define collections of slices, called slice-families.
\begin{definition}[Slice-Family]
Let $k\in\mathds{N}$ and \[A_{0}\subset A_{1}\subset\dots\subset A_{k}\] be a collection of positive invariant sets of a dynamical system \dynSysAll with $\sts \subseteq A_k$ and $A_{0}=\emptyset$. We say that the collection  \[\mathcal{S} \equiv \{S_i = \tn{cl}(A_i\backslash A_{i-1})|~i \in\bm{k}\} \] is a slice-family generated by the sets $\{A_{i}|~i\in\bm{k}\}$ or just a slice-family.
\end{definition}
We associate a function to each slice-family $\mathcal{S}$ to provide a simple way of describing the boundary of a slice. Such a function is called a partitioning function.
\begin{definition}[Partitioning Function]\label{def:part_fun}
Let \dynSysAll be a dynamical system and let $\mathcal{S}$ be a slice-family generated by the sets $\{A_{i}|~i\in\bm{k}\}$. A continuously differentiable function $\varphi:\sts\rightarrow\mathds{R}$ is a partitioning function for $\mathcal{S}$ if there is a sequence
\[a_0 < \hdots < a_k,~~a_{i}\in\mathds{R}\cup\{-\infty,\infty\}\]
where
\begin{align}
\tn{cl}(A_{i}) = \varphi^{-1}([a_{i-1},a_{i}]).
\end{align}
\end{definition}

Next, a transversal intersection of slices is defined.
\begin{definition}[Transversal Intersection of Slices]
We say that the slices $S_{1}$ and $S_{2}$ intersect each other transversally and write
\begin{align}
S_{1}\pitchfork S_{2}=S_{1}\cap S_{2}
\end{align}
if their boundaries intersect each other transversally.
\end{definition}
Cells are generated via intersecting slices. 

\begin{definition}[Extended Cell]\label{def:extended_cell}
Let $\mathcal{S}=\{\mathcal{S}^{i}|i\in\bm{k}\}$ be a collection of $k$ slice-families and let
\[\mathcal{G}(\mathcal{S})\equiv\{1,\dots,|\mathcal{S}^{1}|\}\times\dots\times\{1,\dots, |\mathcal{S}^{k}|\} \subset \mathds N^k.\] Denote the $j^{\tn{th}}$ slice in $\mathcal{S}^{i}$ by $S^{i}_{j}$ and let $g\in\mathcal{G}(\mathcal{S})$. Then
\begin{align}
e_{\tn{ex},g}&\equiv\,\pitchfork_{i=1}^{k} S^{i}_{g_{i}},\label{eqn:extended_cell_def}
\end{align}
where $g_{i}$ is the $i^{\tn{th}}$ component of the vector $g$. Any nonempty set $e_{\tn{ex},g}$ is called an extended cell of $\mathcal S$.
\end{definition}
The cells in \eqref{eqn:extended_cell_def} are called extended cells, since the transversal intersection of slices may form multiple disjoint sets in the state space; however, it is desired to have cells that are connected. 
\begin{definition}[Cell]\label{def:cell}
Let $\mathcal{S}=\{\mathcal{S}^{i}|i\in\bm{k}\}$ be a collection of $k$ slice-families. A cell $e_{(g,h)}$ of $\mathcal S$ is a connected component of an extended cell of $\mathcal S$
\begin{subequations}
\begin{align}
\bigcup_{h} e_{(g,h)} &= e_{\tn{ex},g}, \text{ where}\\
e_{(g,h)}\cap e_{(g,h')}&=\emptyset\indent\forall h\neq h'.
\end{align}
\end{subequations}
\end{definition}

A finite partition based on the transversal intersection of slices is defined in the following.
\begin{definition}[Finite Partition]\label{def:finite_partition}
Let $\mathcal{S}=\{\mathcal{S}^{i}|i\in\bm{k}\}$ be a collection of slice-families. We define a finite partition \patitS by
\begin{align}
e\in \patitS
\end{align}
if and only if $e$ is a cell of $\mathcal S$.
\end{definition}

We propose to use only functions that are nonincreasing along trajectories of the dynamical system \dynSys as partitioning functions.
\begin{definition}[Nonincreasing Partitioning Function]\label{def:lyapunov_function}
Let \sts be an open connected subset of $\mathds{R}^{n}$ and suppose that $\vectField:\sts\rightarrow\mathds{R}^{n}$ is continuous. Then a real non-degenerate differentiable function $\varphi:\sts\rightarrow\mathds{R}$ is said to be a partitioning function for \vectField if
\begin{subequations}
\begin{align}
&\psi(x)\equiv\sum_{j=1}^{n}\frac{\partial \varphi}{\partial x_{j}}(x)\vectField_{j}(x)\label{eqn:Lyap_der}\\
&\psi(x)\leq0~~~\forall x\in \sts.\label{eqn:Lyap_pos_definite}
\end{align}
\end{subequations}
\end{definition}

Figure~\ref{fig:extended_cell} shows a state space partitioned by two partitioning functions, with level sets illustrated by red respectively blue lines. The intersection of slices generates an extended cell (gray area), consisting of four cells (each connected component of the gray area).
\begin{figure}[!htb]
    \centering
       \includegraphics[scale=1]{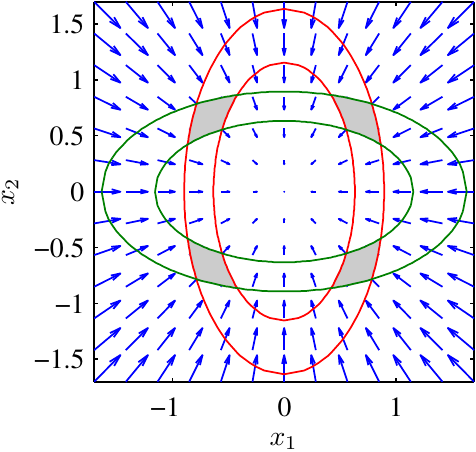}
    \caption{Illustration of a vector field in a state space, partitioned using level sets of two partitioning functions.\label{fig:extended_cell}}
\end{figure}



In the next section, we show how the partitioning functions should be chosen to generate a complete abstraction.

\section{Generation of Timed Automaton from Finite Partition}\label{sec:obtaining_TA}
A timed automaton \TA is generated from a finite partition \patitS as follows.

\begin{definition}[Generation of Timed Automaton]\label{def:generation_of_TA}
Given  a finite collection of slice-families $\mathcal{S}=\{\mathcal{S}^{i}|~i\in\bm{k}\}$, and pairs of times $\mathcal T = \{(\underline{t}^{i}_{g_{i}}, \overline{t}^{i}_{g_{i}}) \in \mathds R_{\geq0}^2|~i\in\bm{k},\,g_{i}\in\{1,\dots,|\mathcal{S}^{i}|\} \}$.
We define the timed automaton $\mathcal{A}(\mathcal S, \mathcal T)=(E, E_{0}, C, \Sigma, I, \Delta)$ by
\begin{itemize}
\item \textbf{Locations:} The locations of $\mathcal{A}$ are given by
\begin{align}
E=E(\mathcal{S}).\label{eqn:TA_locations}
\end{align}
This means that a location $e_{(g,h)}$ is identified with the cell $e_{(g,h)}=\alpha_{E(\mathcal S)}^{-1}(\{e_{(g,h)}\})$ of the partition $E(\mathcal{S})$, see Definition~\ref{def:abstraction_function}.

\item \textbf{Clocks:} The set of clocks is
$C = \{c^{i}|~i\in\bm{k}\}$.

\item \textbf{Alphabet:} The alphabet is
$\Sigma = \mathcal \{\sigma^{i}|~i\in\bm{k}\}$.

\item \textbf{Invariants:} In each location $e_{(g,h)}$, we impose an invariant
\begin{align}
I(e_{(g,h)})&= \bigwedge_{i=1}^{k}c^{i}\bm{\leq} \overline{t}^{i}_{g_{i}}.\label{eqn:TA_invariants}
\end{align}

\item \textbf{Transition relations:} If a pair of locations $e_{(g,h)}$ and $e_{(g',h')}$ satisfy the following two conditions
\begin{enumerate}
\item $e_{(g,h)}$ and $e_{(g',h')}$ are adjacent; that is $e_{(g,h)}\cap e_{(g',h')}\neq\emptyset$, and
\item $g'_{i}\leq g_{i}$ for all $i\in\bm{k}$.
\end{enumerate}
Then there is a transition relation
\begin{subequations}
\begin{align}
\notag\delta_{(g,h)\rightarrow (g',h')}& = (e_{(g,h)}, G_{(g,h)\rightarrow(g',h')},\sigma^i, R_{(g,h)\rightarrow(g',h')}, e_{(g',h')}),
\end{align}
where
\begin{align}
&G_{(g,h)\rightarrow(g',h')}=\bigwedge_{i=1}^{k}\begin{cases}c^{i}\geq \underline{t}^{i}_{g_{i}}&\tn{if }g_{i}-g'_{i}=1\\c^{i}\geq0&\tn{otherwise.}\end{cases}\label{eqn:TA_guard}
\end{align}
Note that $g_{i}-g'_{i}=1$ whenever a transition labeled $\sigma^{i}$ is taken.

Let $i\in\bm{k}$. We define $R_{(g,h)\rightarrow(g',h')}$ by
\begin{align}
&c^{i}\in R_{(g,h)\rightarrow(g',h')}\label{eqn:TA_reset}
\end{align}
iff $g_{i}-g'_{i}=1$.
\end{subequations}
\end{itemize}
\end{definition}

To ensure that the properties of an abstraction is not only valid for a particular choice of level sets, we impose the derived condition for any choice of level set in the partition.
%
%

From Definition~\ref{def:generation_of_TA}, it is seen that to generate a timed automaton, it is required to devise a partition of the state space, and find a set of invariant and guard conditions. Therefore, we provide a condition under which an abstraction is complete, recall the definition of a complete abstraction in Section~\ref{sec:abstractions_of_dynamical_systems}.
\begin{proposition}[\cite{Complete_Abstractions_of_Dynamical_Systems_by_Timed_Automata}]\label{prop:completeness_of_partition}
Given a dynamical system \dynSysAll, a collection of partitioning functions $\{\varphi^{i}|~i\in\bm{k}\}$, a collection of values $\{a^{i} _{g_{i}}\in\mathds{R} |~i\in\bm{k},\,g_{i}\in\{1,\dots,|\mathcal{S}^{i}|\} \}$ generating $\mathcal{S}$, and pairs of times $\mathcal T = \{(\underline{t}^{i}_{g_{i}}, \overline{t}^{i}_{g_{i}}) |~i\in\bm{k},\,g_{i}\in\{1,\dots,|\mathcal{S}^{i}|\} \}$. The timed automaton $\TA(\mathcal{S}, \mathcal{T})$ is a complete abstraction of $\Gamma$ if and only if for any $i \in \bm{k}$
\begin{enumerate}
\item for any pair of regular values $(a^{i}_{g_{i}-1},a^{i}_{g_{i}})$ with $g\in\mathcal{G}(\mathcal{S})$ (see the definition of $\mathcal{G}(\mathcal{S})$ in Definition~\ref{def:extended_cell}) there exists a time $t^{i}_{g_{i}}$ such that for all $x_{0}\in (\varphi^{i})^{-1}(a^{i}_{g_{i}})$
\begin{align}
\phi_{\Gamma}(t^{i}_{g_{i}},x_{0})\in (\varphi^{i})^{-1}(a^{i}_{g_{i}-1})\label{eqn:suf_cond_sound2}
\end{align}
and
\item $\overline{t}_{S^{i}_{g_{i}}}=\underline{t}_{S^{i}_{g_{i}}}=t^{i}_{g_{i}}$.
\end{enumerate}
\end{proposition}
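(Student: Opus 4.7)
The plan is to prove the two implications separately, since the statement is an ``if and only if'' characterization of completeness, and to work throughout in terms of how the discrete flow $\Phi_{\TA}$ of the constructed automaton relates, via the abstraction function $\alpha_E$, to the continuous flow $\phi_\Gamma$. The central observation I want to exploit is that the clock $c^i$ introduced in Definition~\ref{def:generation_of_TA} is designed to measure the time spent in the current slice of the $i$-th slice-family, and that the guard/invariant pair $(\underline{t}^i_{g_i},\overline{t}^i_{g_i})$ constrains exactly when a transition across a level set of $\varphi^i$ can fire. Because the $k$ partitioning functions act essentially independently (transitions are per-function due to the single symbol $\sigma^i$ and the clause $g_i-g_i'=1$), I expect to be able to reduce the argument, in both directions, to a one-dimensional statement per slice-family and then combine via the product structure of cells.

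For sufficiency, I assume that for every $i\in\bm{k}$ and every pair of regular values $(a^i_{g_i-1},a^i_{g_i})$ there exists a uniform crossing time $t^i_{g_i}$ and that the invariant/guard satisfy $\underline{t}^i_{g_i}=\overline{t}^i_{g_i}=t^i_{g_i}$. I first note that any trajectory entering the level set $(\varphi^i)^{-1}(a^i_{g_i})$ reaches $(\varphi^i)^{-1}(a^i_{g_i-1})$ after exactly $t^i_{g_i}$ units of time, because $\varphi^i$ is nonincreasing along solutions (Definition~\ref{def:lyapunov_function}) and the value $t^i_{g_i}$ is independent of the initial point on the level set. Consequently, the clock value $c^i$ in the automaton, after being reset when the previous level set was crossed, takes exactly the value $t^i_{g_i}$ when the system reaches the next level, so the guard $c^i\geq \underline{t}^i_{g_i}$ is satisfied at the correct instant, the invariant $c^i\bm\leq \overline{t}^i_{g_i}$ is not violated before it, and the transition $\sigma^i$ in the automaton is forced to fire at the same time as the continuous flow crosses the corresponding level set. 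Applying this cell by cell along a trajectory yields $\alpha_E\circ\phi_\Gamma(t,X_0)=\Phi_{\TA}(t,\alpha_E(X_0))$ on the given interval.

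For necessity, I assume completeness and derive both conditions by contradiction. Suppose first that for some $i$ and some pair of regular values there is no uniform crossing time, i.e.\ there exist two points $x_0,x_0'\in(\varphi^i)^{-1}(a^i_{g_i})$ whose first hitting times of $(\varphi^i)^{-1}(a^i_{g_i-1})$, call them $\tau$ and $\tau'$ with $\tau<\tau'$, are distinct. Because $\alpha_E$ sends both points to the same location $e_{(g,h)}$ and the discrete flow $\Phi_{\TA}$ is determined entirely by the location and the clock values (both of which coincide for $x_0$ and $x_0'$), the set $\Phi_{\TA}(\tau,\alpha_E(x_0))$ cannot simultaneously contain only the image of $\phi_\Gamma(\tau,x_0)$ (which lies in a new cell) and the image of $\phi_\Gamma(\tau,x_0')$ (which is still in $e_{(g,h)}$), contradicting the equality in the definition of completeness. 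A similar argument, choosing the witness to lie slightly inside either side of $[\underline{t}^i_{g_i},\overline{t}^i_{g_i}]$, forces $\underline{t}^i_{g_i}=\overline{t}^i_{g_i}=t^i_{g_i}$: a strictly larger $\overline{t}^i_{g_i}$ would allow the automaton to remain in $e_{(g,h)}$ past the real crossing time, and a strictly smaller $\underline{t}^i_{g_i}$ would permit a premature transition not realized by any trajectory.

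The step I expect to require the most care is the necessity argument in the multi-function setting, because the clocks $c^i$ are reset independently and a single continuous trajectory can make several clocks reach their thresholds at unrelated instants, so the ``spurious behaviors'' that a complete abstraction must exclude can involve interleavings of transitions $\sigma^i$ that do not arise from any $\phi_\Gamma$. Handling this cleanly will rely on fixing one index $i$ at a time and constructing initial conditions on the joint level set of all other $\varphi^j$, so that the other clocks are neutralized and the contradiction is produced purely from the $i$-th coordinate. A secondary technical point is verifying that the distinction between extended cells and cells (connected components) from Definition~\ref{def:cell} does not introduce additional obstructions: since the abstraction function $\alpha_E$ identifies points within a single connected component and transitions between adjacent cells respect connectedness, the product-style reduction above goes through on each connected component separately.
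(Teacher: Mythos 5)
A preliminary remark: the paper itself gives no proof of Proposition~\ref{prop:completeness_of_partition} --- it is imported verbatim from the cited earlier work --- so there is no in-paper argument to compare against, and your attempt has to be judged on its own merits. Your overall skeleton (two directions; exact crossing times force $\underline{t}^{i}_{g_{i}}=\overline{t}^{i}_{g_{i}}=t^{i}_{g_{i}}$; per-coordinate reduction) is the right shape, but there are two genuine gaps. First, in the necessity direction, the claim ``$\alpha_E$ sends both points to the same location $e_{(g,h)}$'' is false in general: two points of the level set $(\varphi^{i})^{-1}(a^{i}_{g_{i}})$ typically lie in different cells, either in distinct connected components $h\neq h'$ of the same extended cell (Definition~\ref{def:cell}) or in cells with different indices $g_{j}$ for $j\neq i$, so the contradiction you set up only covers a special case. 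The repair is in Definition~\ref{def:generation_of_TA} and you never invoke it: the bounds $\underline{t}^{i}_{g_{i}},\overline{t}^{i}_{g_{i}}$ attached to the clock $c^{i}$ depend only on the coordinate $g_{i}$, not on $h$ or on the other coordinates, so every location met by the level set shares the same timing window for $\sigma^{i}$-transitions, and this sharing is what forces a single crossing time $t^{i}_{g_{i}}$ uniformly over the whole level set. This also makes your ``neutralize the other clocks by starting on a joint level set'' device both insufficient (the two witnesses can still lie in different connected components) and unnecessary (the per-coordinate structure of \eqref{eqn:TA_invariants} and \eqref{eqn:TA_guard} already decouples the indices).

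Second, the sufficiency direction is only half-proved: you argue soundness (each trajectory induces a run with matching switching times) but assert, rather than establish, the reverse inclusion that every run of $\mathcal{A}(\mathcal{S},\mathcal{T})$ is shadowed by a trajectory, which is the substance of completeness. Two specific obstacles are skipped. (i) The discrete flow is defined with clocks initialized at $v_{0}=\bm{0}$, whereas an initial point $x_{0}$ in the \emph{interior} of a cell has residual crossing time strictly less than $t^{i}_{g_{i}}$; for such $X_{0}$ the equality $\alpha_E\circ\phi_\Gamma(t,X_0)=\Phi_{\mathcal{A}}(t,\alpha_E(X_0))$ cannot hold with your argument, so a careful proof must restrict to (or reduce to) initial sets compatible with zero clock valuations, e.g.\ points entering their cells at time zero. (ii) Your opening claim that condition \eqref{eqn:suf_cond_sound2} plus monotonicity gives crossing ``after exactly $t^{i}_{g_{i}}$ units'' is too quick: \eqref{eqn:suf_cond_sound2} only places the trajectory on the inner level set \emph{at} time $t^{i}_{g_{i}}$; since only $\psi\leq0$ is assumed, the trajectory could first reach that level set at some $\tau<t^{i}_{g_{i}}$ and dwell on it, in which case $\alpha_E\circ\phi_\Gamma(t,x_0)$ contains \emph{both} adjacent (closed) cells for $t\in[\tau,t^{i}_{g_{i}})$ while the automaton still reports only the source location --- breaking equality. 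Ruling out dwelling requires using regularity of the values $a^{i}_{g_{i}}$ and condition (1) at intermediate regular values, in the spirit of Proposition~\ref{prop:prop2}; relatedly, when several clocks reach their thresholds simultaneously you must check that all interleavings of the $\sigma^{i}$ land in locations that $\alpha_E$ actually assigns to the trajectory point on the intersection of level sets (true, since that point lies in all the cells involved, but it needs saying).
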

We say that $\{\varphi^i|~i\in\bm{k}\}$ generates a complete abstraction if there exist times $\mathcal{T}$ such that Proposition~\ref{prop:completeness_of_partition} is satisfied.

In the introduction of the paper, we claim that it is impossible to generate a complete abstraction of a system with a saddle point using transversal partitioning functions. In the following example a complete abstraction is given for such a system by nonincreasing partitioning functions.

\begin{example}\label{ex:saddle_lin}
Consider the following two-dimensional linear vector field
\begin{align}
\begin{bmatrix}
\dot{x}_1\\
\dot{x}_2
\end{bmatrix}&=
\begin{bmatrix}
-x_1\\
x_2
\end{bmatrix}\label{eqn:vectFieldEx1}.
\end{align}
The system has a saddle point and its phase plot is shown in Figure~\ref{fig:example1_saddle_phase_plot}.
\begin{figure}[!htb]
    \centering
       \includegraphics[scale=1]{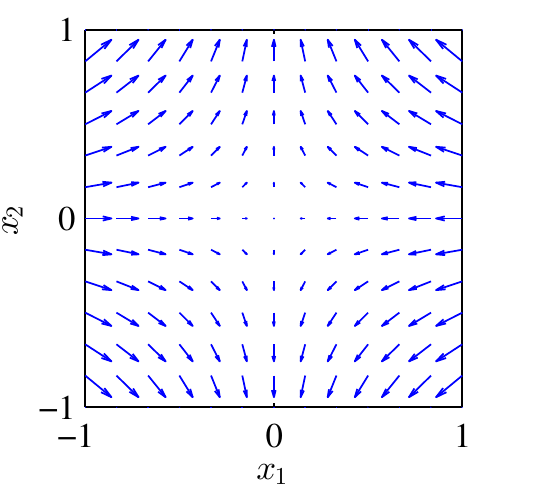}
    \caption{Phase plot of a linear system with a saddle point.\label{fig:example1_saddle_phase_plot}}
\end{figure}

We choose the following partitioning functions
\begin{subequations}
\begin{align}
\varphi_{1}(x_1,x_2) &= x_1^2,\\
\varphi_{2}(x_1,x_2) &= -x_2^2,
\end{align}
\end{subequations}
Recall from \eqref{eqn:Lyap_der} that we denote Lie derivatives by $\psi$. The Lie derivatives of the partitioning functions along the vector field in \eqref{eqn:vectFieldEx1} become
\begin{subequations}
\begin{align}
\psi_{1}(x_1,x_2) &= -2x_1^2,\\
\psi_{2}(x_1,x_2) &= -2x_2^2.
\end{align}
\end{subequations}
This implies that the abstraction generated by $\{\varphi_{1},\varphi_{2}\}$ is complete. Completeness is concluded from Proposition~\ref{prop:completeness_of_partition}.
\end{example}

Proposition~\ref{prop:completeness_of_partition} does not provide a straightforward method for computing a complete abstraction, as the conditions are not numerically tractable. Therefore, we rephrase \eqref{eqn:suf_cond_sound2} as a relation between the level sets of the partitioning function and its derivative along the vector field.
It is difficult to determine if the partitioning functions in Example~\ref{ex:saddle_lin} generate a complete abstraction from Proposition~\ref{prop:completeness_of_partition}; however, this is clear from the following proposition, as the level sets of $\phi$ and $\psi$ coincide.
\begin{proposition}\label{prop:prop2}
Let \dynSys be a dynamical system and $\varphi$ be a smooth function, in particular a partitioning function. Then for any regular value $a\in\mathds{R}$, the following statements are equivalent
\begin{enumerate}
\item there exists $b\in\mathds{R}$ such that \begin{align}
\{x\in\mathds{R}^{n}|~\varphi(x)-a=0\}\subseteq\{x\in\mathds{R}^{n}|~\psi(x)-b=0\},\label{eqn:subset_lev_set11}
\end{align}
where
\begin{align}
\notag\psi(x) \equiv d \varphi (f) (x) = \sum_{i}\frac{\partial\varphi}{\partial x_{i}}(x)f_{i}(x).
\end{align}
\item there exists $\epsilon>0$ such that for any $t\in(-\epsilon,\epsilon)$
\begin{align}
\varphi(\phi_{\Gamma}(t,x_{1}))=\varphi(\phi_{\Gamma}(t,x_{2}))~~~~\forall x_{1},x_{2}\in \varphi^{-1}(a).\label{eqn:new2}
\end{align}
\end{enumerate}
\end{proposition}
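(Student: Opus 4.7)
My plan is to treat the two implications separately, with $(2) \Rightarrow (1)$ being an immediate differentiation at $t=0$ and $(1) \Rightarrow (2)$ requiring an upgrade of the pointwise hypothesis on $\psi$ to a factorization of $\psi$ through $\varphi$.

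For $(2) \Rightarrow (1)$, I would fix $x \in \varphi^{-1}(a)$ and set $g_x(t) := \varphi(\phi_\Gamma(t, x))$. The chain rule combined with \eqref{eqn:solution_of_auto_differential_equation} yields
\[ g_x'(t) = \sum_{j} \frac{\partial \varphi}{\partial x_j}(\phi_\Gamma(t,x)) \, f_j(\phi_\Gamma(t,x)) = \psi(\phi_\Gamma(t, x)), \]
so in particular $g_x'(0) = \psi(x)$. By hypothesis (2) the function $g_x(t)$ is independent of $x \in \varphi^{-1}(a)$ on $(-\epsilon,\epsilon)$, and differentiating at $t = 0$ then forces $\psi(x)$ to take a common value $b$ on $\varphi^{-1}(a)$, which is exactly (1).

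For $(1) \Rightarrow (2)$, the naive Taylor expansion of $g_x$ at $t=0$ only secures first-order agreement: the second-order coefficient $d\psi_x(f(x))$ is not automatically constant on $\varphi^{-1}(a)$ when only $\psi$ itself is. My plan is therefore to upgrade (1) to a local factorization $\psi = F \circ \varphi$ on a neighborhood of $\varphi^{-1}(a)$. Pick $p \in \varphi^{-1}(a)$; since $d\varphi(p) \neq 0$, the implicit function theorem furnishes coordinates $(u, z_2,\ldots, z_n)$ near $p$ in which $\varphi(u,z) = u$. Interpreting (1) uniformly on the nearby regular level sets (dense by Sard's theorem) gives that $\psi$ restricted to each slice $\{u = u_0\}$ is constant, equivalently $\partial \psi / \partial z_j \equiv 0$ for $j \geq 2$, so $\psi$ depends only on $u$ near $p$ and defines a smooth $F$ with $\psi = F(\varphi)$ locally.

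With the factorization in hand, $g_x$ satisfies the closed autonomous scalar ODE $g_x'(t) = F(g_x(t))$ with initial value $g_x(0) = a$. Uniqueness of solutions to scalar ODEs then forces $g_x(t)$ to equal a single function $v(t)$ of time alone, delivering (2) on any interval $(-\epsilon, \epsilon)$ over which $v$ stays among regular values of $\varphi$. I expect the factorization step to be the main obstacle, since one really needs (1) on a dense collection of nearby level sets --- not only at the single value $a$ --- for the higher-order Taylor coefficients of $g_x$ to become independent of $x$.
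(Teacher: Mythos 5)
Your proof is correct, and for the hard direction $(1)\Rightarrow(2)$ it takes a recognizably different technical route from the paper's. The paper globalizes at once: following Milnor, it flows the level set $M=\varphi^{-1}(a)$ along the normalized gradient field $\xi=\nabla\varphi/\|\nabla\varphi\|^{2}$ to build a product chart $F:M\times[a',a'']\to\varphi^{-1}([a',a''])$ in which $\varphi$ is the last coordinate, shows that $\tilde\psi=\tilde f_n$ depends only on that coordinate, and then transports the conclusion back through $F$-relatedness. You instead straighten $\varphi$ locally via the submersion normal form, upgrade hypothesis (1) to a factorization $\psi=F\circ\varphi$, and finish with uniqueness for the scalar ODE $g_x'=F(g_x)$, $g_x(0)=a$. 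Your endgame is actually cleaner and more accurate: the paper's final step asserts $\phi_{\Gamma}(x_1,t)=\phi_{\Gamma}(x_2,t)$, which is literally false (two distinct initial points cannot have identical flows; only the $\varphi$-values along the flows agree), whereas your scalar-ODE argument proves exactly the statement \eqref{eqn:new2} that is needed. Moreover, the caveat you flag at the end is a genuine gap in the paper's own proof that you have diagnosed precisely: when the paper invokes \eqref{eqn:subset_lev_set11} at the points $F(x_i,t)$ with $t\neq 0$, these lie on the level sets $\varphi^{-1}(a+t)$, not on $\varphi^{-1}(a)$, so constancy of $\psi$ there does not follow from (1) at the single value $a$ --- indeed $\varphi=x_1$, $f=(1+x_1g(x_2),0)$ with nonconstant $g$ gives (1) at $a=0$ while (2) fails. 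Both proofs thus really use (1) at all nearby regular values; your appeal to density of regular values (the correct consequence of Sard, as opposed to the paper's claim of an open neighborhood of regular values, which in general needs properness) together with continuity of $\psi$ is the right way to make that usable. One small point to keep in mind: your local factorizations patch into a single $F$ on a neighborhood of $M$ precisely because the strengthened hypothesis makes $\psi$ constant on whole nearby level sets, so $F(u)$ is well defined independently of the chart; it is worth saying this explicitly when writing the argument up.
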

\begin{proof}
We show that 2) implies 1).
We differentiate both sides of
\begin{align*}
\varphi(\phi_{\Gamma}(t,x_{1}))=\varphi(\phi_{\Gamma}(t,x_{2}))
\end{align*}
with respect to $t$
\begin{subequations}
\begin{align}
\sum_{i}\frac{\partial\varphi}{\partial x_{i}}(\phi_{\Gamma}(t,x_{1}))f_{i}(\phi_{\Gamma}(t,x_{1}))&=
\sum_{i}\frac{\partial\varphi}{\partial x_{i}}(\phi_{\Gamma}(t,x_{2}))f_{i}(\phi_{\Gamma}(t,x_{2})),\\
\psi(\phi_{\Gamma}(t,x_{1}))&=\psi(\phi_{\Gamma}(t,x_{2})).
\end{align}
\end{subequations}
At $t=0$, $\psi(x_{1})=\psi(x_{2})$. Hence, \eqref{eqn:subset_lev_set11} is satisfied.

To show that 1) implies 2), we define a convenient state transformation inspired by \cite[p.~13]{Morse_Theory_book}. In the new coordinates, the vector field has only one nonzero component.

Let $M=\varphi^{-1}(a)$ be a smooth manifold. By Sard's Theorem, there exists an open neighborhood $U$ of $a$ such that any point in $U$ is a regular value of $\varphi$ \cite[p.~132]{Introduction_to_Smooth_Manifolds}. Define the smooth function $\eta:\varphi^{-1}(U) \rightarrow \mathds{R}$ by
\begin{align}
\eta \equiv \frac{1}{||\nabla\varphi||^2},
\end{align}
where $\nabla\varphi$ is the gradient of $\varphi$ (with respect to a Riemannian metrics $\left<\cdot, \cdot \right>$ on M). The function $\eta$ is well defined, since $\varphi^{-1}(U) \subset M$ is an (open) set of regular points of $\varphi$. Define the vector field $\xi$ on $\varphi^{-1}(U)$ by
\begin{align}
\xi \equiv \eta \nabla \varphi.
\end{align}
The derivative of $\varphi$ in the direction of $\xi$ is
\begin{align}
d \varphi(\xi)=\langle\nabla\varphi,\xi\rangle=\eta\langle\nabla\varphi,\nabla\varphi\rangle=1,
\label{unit_increment}
\end{align}
Choose $a', a'' \in \mathds R$ such that $a'<a<a''$ and $[a',a'']\subset U$. We define the map $F(\cdot,\cdot):M\times[a',a'']\rightarrow\varphi^{-1}([a',a'']) \subset X$, where $F(x_{0},t)$ is the solution of $\xi$ from initial state $x_{0}$. From \eqref{unit_increment}, we have
\begin{align}
t\mapsto\varphi\circ F(x_{0},t)=a+t.\label{eqn:good_map}
\end{align}

We represent the vector field $f$ and function $\varphi$ in new coordinates $q = F(x,t)$
\begin{subequations}
\begin{align}
\tilde f (q) & = (D F(q))^{-1}f \circ F(q) \\
\tilde \varphi (q) & = \varphi\circ F(q) \label{eqn:varphi_tilde}
\end{align}
\end{subequations}

Notice that the vector fields $f$ and $\tilde f$ are $F$-related. The differential of $\tilde \varphi$ is
\begin{align}
d\tilde{\varphi}(q)=d\varphi(F(q))DF(q).
\end{align}
Hence, the derivative of $\tilde \varphi$ if the direction of $\tilde f$ is
\begin{align}
d\tilde{\varphi}(\tilde{f})&=d \varphi \circ F ~ DF (DF)^{-1} f \circ F\\
&=d\varphi(f) \circ F.
\end{align}

Let $x_{1},x_{2}\in M$, then $F(x_1,t)$ and $F(x_2,t)$ are regular points for $t \in (-\epsilon, \epsilon)$. By \eqref{eqn:subset_lev_set11}
\begin{align}
d \varphi (f)(F(x_{1},t)) = d\varphi (f)(F(x_{2},t)).
\end{align}
We define $\tilde{\psi}=d\tilde{\varphi}(\tilde{f})$, and conclude that
\begin{align}
\tilde{\psi}(x_1,t) = \tilde{\psi}(x_2,t)
\end{align}

From~\eqref{eqn:good_map}, $\tilde{\varphi}$ only depends on the last coordinate. Therefore, denoting $f = (f_1, \hdots, f_n)$, we have
\begin{align}
\tilde{\psi} = d\tilde{\varphi}\tilde{f}=\tilde{f}_{n}.
\end{align}

Since $\tilde \psi (x_1,t) = \tilde \psi(x_2,t)$ for any pair $x_1, x_2 \in M$ and $t \in [a',a'']$, we have $\tilde f_n(x_1,t) = \tilde f_n(x_2,t)$. In other words, the $n$th component of the vector field $\tilde f$ depends only on its last $n$ coordinate $t$. As a consequence, denoting the flow map of the vector field $\tilde f$ by $\phi_{\tilde \Gamma}$, we have
\begin{align}
\phi_{\tilde \Gamma} ((x_1,0),t ) = \phi_{\tilde \Gamma} ((x_2,0),t ) \in M \times (-\epsilon, \epsilon).
\end{align}
The vector field $f$ and $\tilde f$ are $F$-related, hence also
\begin{align}
\phi_{\Gamma} (x_1,t ) = \phi_{\Gamma} (x_2,t ).
\end{align}
Thus,  the inclusion \eqref{eqn:suf_cond_sound2} holds.
\end{proof}

Unfortunately, we cannot relax proposition to include critical values as well, i.e., it does not hold for any $a\in\mathds{R}$. This is clarified in the following, by presenting a dynamical system with state space $\sts\subseteq\mathds{R}^2$ for which there exists no complete abstraction generated by transversal partitioning functions.

According to Proposition~\ref{prop:completeness_of_partition}, it takes the same time for two trajectories to propagate between level sets of complete partitioning functions. Consider a dynamical system with flow lines shown in Figure~\ref{fig:vectorFieldmm}.
\begin{figure}[!htb]
    \centering
       \includegraphics[scale=1.2]{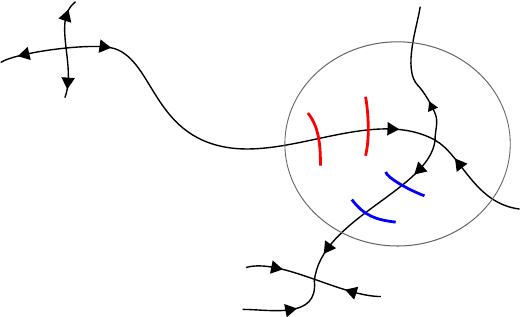}
    \caption{Flow lines of a dynamical system, with a saddle point, a stable equilibrium point, and an unstable equilibrium point.\label{fig:vectorFieldmm}}
\end{figure}
Level sets of two partitioning functions are illustrated at the saddle point inside the gray circle. From the red level set, one trajectory goes to the saddle point; however, the remaining trajectories diverge from the saddle point. Therefore, it cannot take the same time to propagate between any two level sets, if the partitioning function is decreasing along the vector field.
In the following, we denote the stable (unstable) manifold by $W^{\tn{s}}(p)$ ($W^{\tn{u}}(p)$) \cite{hirsch}.
\begin{lemma}\label{lem:crit}
Let $\varphi$ be a partitioning function generating a complete abstraction of \dynSysAll and let $p$ be a singular point of \vectField. Then $p$ is a critical point of $\varphi$.
\end{lemma}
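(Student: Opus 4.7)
The plan is to reach a contradiction from the assumption that $p$ is a singular point of $f$ yet $d\varphi(p) \neq 0$, by showing that completeness would force the $\varphi$-value along any trajectory to approach $\varphi(p)$ only exponentially, making it impossible for a trajectory to cross the level $\varphi(p)$ in finite time.

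First I would combine Proposition~\ref{prop:completeness_of_partition} with Proposition~\ref{prop:prop2}, applied for every pair of regular values, to conclude that $\psi$ is constant on each regular level set of $\varphi$. Writing this constant as $B(a)$ and extending continuously through critical values (using continuity of $\psi$ and density of regular values via Sard), I obtain a continuous function $B: \varphi(\sts) \to \mathds{R}$ with $\psi = B \circ \varphi$ on all of $\sts$, $B \leq 0$ by~\eqref{eqn:Lyap_pos_definite}, and $B(\varphi(p)) = 0$ since $\psi(p) = d\varphi(p)\, f(p) = 0$.

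Next I would use $d\varphi(p) \neq 0$ to apply the implicit function theorem and pick local coordinates $(u_1,\dots,u_n)$ near $p$ in which $\varphi = \varphi(p) + u_n$, just as in the proof of Proposition~\ref{prop:prop2}. In these coordinates, the $n$th component $\tilde f_n$ of the transformed vector field coincides with $\psi$, hence depends only on $u_n$ and satisfies $\tilde f_n(0)=0$. Because $f$ is locally Lipschitz at the non-degenerate singular point $p$, so is $\tilde f_n$, yielding $|\tilde f_n(u_n)| \leq L|u_n|$, i.e.\ $|B(y)| \leq L|y - \varphi(p)|$ for all $y$ in a neighborhood of $\varphi(p)$.

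Finally, for any trajectory $x(t) = \phi_{\Gamma}(t,x_0)$, the scalar $y(t)\equiv\varphi(x(t))$ satisfies $\dot y = B(y)$. Combining the Lipschitz bound with the monotonicity $\dot y \leq 0$, Gronwall's inequality gives $y(t) - \varphi(p) \geq (y(0)-\varphi(p))\,e^{-Lt}$ whenever $y(0)>\varphi(p)$, so such a trajectory remains strictly above $\varphi(p)$ for all time. Picking regular values $a>\varphi(p)>a'$ close to $\varphi(p)$ (which exist by Sard's theorem and the fact that $\varphi$ is a local submersion at $p$, so the level sets $\varphi^{-1}(a)$, $\varphi^{-1}(a')$ are nonempty), completeness nonetheless requires a finite time $T(a,a')$ in which the trajectory crosses from $\varphi^{-1}(a)$ to $\varphi^{-1}(a')$; this contradicts the ODE estimate and forces $d\varphi(p)=0$. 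The subtlest point I anticipate is justifying cleanly that $B$ is continuous and admits a Lipschitz bound through $\varphi(p)$ itself, even though $\varphi(p)$ may be a critical value of $\varphi$; this requires passing through the critical level using density of regular values together with Lipschitz regularity of $f$ at the singular point.
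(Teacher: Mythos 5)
Your proof is correct in substance, but it follows a genuinely different route from the paper's. The paper's own proof is a two-line qualitative argument: if $p$ were a regular point of $\varphi$ with value $c$, pick $x\in\varphi^{-1}(c)\backslash W^{\tn{s}}(p)$; then $x$ leaves the level set at some finite time $\tau$ while $\phi_{\Gamma}(\tau,p)=p$ stays on it, contradicting (via the equivalence in Proposition~\ref{prop:prop2}) the requirement that all points of a level set evolve to equal $\varphi$-values under a complete abstraction. Your argument is instead quantitative: constancy of $\psi$ on regular level sets gives $\psi=B\circ\varphi$ locally, the straightening coordinates plus local Lipschitz continuity of $\vectField$ at the singular point yield $|B(y)|\leq L|y-\varphi(p)|$, and Gronwall then shows the level $\varphi(p)$ cannot be crossed in finite time, whereas completeness (Proposition~\ref{prop:completeness_of_partition}, applied to regular values $a'<\varphi(p)<a$ that exist by Sard and are attained near $p$ by submersivity) demands a common finite crossing time. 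Interestingly, your longer route buys something real: the paper's proof silently needs a point of the level set outside $W^{\tn{s}}(p)$, and its ``without loss of generality, $p$ is not stable'' does not cover cases such as $\vectField(x)=-x$ with $\varphi(x)=x_{1}$, where the level hyperplane through $p$ is invariant and entirely contained in $W^{\tn{s}}(p)$; there the incompleteness manifests precisely as infinite travel time across the critical level, which is exactly what your Gronwall estimate detects. Your proof also avoids tacitly treating $\varphi(p)$ as a regular value, which the paper's invocation of completeness at the level set through $p$ requires. One repair to make: the global identity $\psi=B\circ\varphi$ (and hence $\dot y=B(y)$) can fail at critical points of $\varphi$ lying on critical level sets, since there $\psi=0$ while the limit of $B$ along regular values need not vanish; but because $\psi=0$ at such points, the one-sided inequality $\dot y\geq -L\,(y-\varphi(p))$ --- which is all the Gronwall step uses --- holds everywhere in the window, by density of regular values and continuity of $\psi$. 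You should therefore state the differential inequality rather than the exact scalar ODE, which also discharges the subtlety you flagged about extending $B$ through the possibly critical value $\varphi(p)$.
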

\begin{proof}
Suppose that $p$ is a regular point of $\varphi$ with regular value $c$. Then $\varphi^{-1}(c)$ is an $n-1$ dimensional manifold. Without loss of generality, we assume that $p$ is not stable. Let $x\in\varphi^{-1}(c)\backslash W^{\tn{s}}(p)$. Then there exists $\tau>0$ such that  $\phi(\tau,x)\notin\varphi^{-1}(c)$, but $p=\phi(\tau,p)\in\varphi^{-1}(c)$, since $p$ is a singular point of $f$. This contradicts completeness.
\end{proof}

The following theorem is the main contribution of the paper, showing that complete abstractions identify stable and unstable manifolds.
\begin{theorem}\label{thm:res}
Let $\tn{Reg}(\varphi)$ denote the set of regular values of $\varphi$, generating a complete abstraction of \dynSysAll, let \sts be a connected compact manifold, and let $p$ be a singular point of \vectField. Suppose that $\varphi^{-1}(\tn{Reg}(\varphi))\cap W^{\tn{s}}(p)\neq\emptyset$. Then 
\[W^{u}(p)\subseteq\varphi^{-1}(\varphi(p))  .\]
\end{theorem}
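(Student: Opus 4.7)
The plan is to argue by contradiction, combining Lemma~\ref{lem:crit}, Proposition~\ref{prop:prop2}, and the inclination (or $\lambda$) lemma at the hyperbolic equilibrium $p$.

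First I would fix notation. Lemma~\ref{lem:crit} gives that $p$ is a critical point of $\varphi$, so $\varphi(p)$ is a critical value. Pick $x^*\in W^{\tn{s}}(p)\cap \varphi^{-1}(\tn{Reg}(\varphi))$ and set $c^*:=\varphi(x^*)$. Because $\varphi$ is nonincreasing along trajectories of $\Gamma$ and $\phi_{\Gamma}(t,x^*)\to p$ as $t\to\infty$, the scalar map $t\mapsto\varphi(\phi_{\Gamma}(t,x^*))$ is monotone nonincreasing and converges to $\varphi(p)$. Since $c^*$ is regular while $\varphi(p)$ is critical, $c^*>\varphi(p)$ strictly, and $\varphi(\phi_{\Gamma}(t,x^*))>\varphi(p)$ for every $t\ge 0$. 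Now, for contradiction, assume there is $y\in W^{\tn{u}}(p)$ with $\varphi(y)\ne \varphi(p)$. Applying the analogous reasoning in backward time---$\varphi$ is nondecreasing along orbits run in reverse and $\phi_{\Gamma}(-t,y)\to p$---forces $\varphi(y)<\varphi(p)$ strictly.

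The crux is to manufacture a point $z\in\varphi^{-1}(c^*)$ whose forward orbit comes arbitrarily close to $y$. Assuming $W^{\tn{s}}(p)$ meets $\varphi^{-1}(c^*)$ transversally at $x^*$ (otherwise I would replace $x^*$ by $\phi_{\Gamma}(\tau,x^*)$ for a small $\tau>0$ and work at the slightly smaller regular level), a dimension count produces a disk $D\subseteq\varphi^{-1}(c^*)$ through $x^*$ of dimension $\dim W^{\tn{u}}(p)$ that is transverse in the full state space to $W^{\tn{s}}(p)$. By the $\lambda$-lemma the forward images $\phi_{\Gamma}(T,D)$ then accumulate $C^1$ on $W^{\tn{u}}(p)$ as $T\to\infty$; concatenating with a further finite flow along the orbit of $y$, for every $\epsilon>0$ there exist $T_\epsilon$ and $z_\epsilon\in D\subseteq\varphi^{-1}(c^*)$ with $\phi_{\Gamma}(T_\epsilon,z_\epsilon)$ within $\epsilon$ of $y$. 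Next I would invoke Proposition~\ref{prop:prop2} at the regular value $c^*$, bootstrapped from its small-time window to the whole interval $[0,T_\epsilon]$ by a standard open/closed argument (using density of regular values of $\varphi$ in $(\varphi(p),c^*]$ and refining the partition of Proposition~\ref{prop:completeness_of_partition} when needed). This yields $\varphi(\phi_{\Gamma}(T_\epsilon,z_\epsilon))=\varphi(\phi_{\Gamma}(T_\epsilon,x^*))>\varphi(p)$, while continuity of $\varphi$ together with $\phi_{\Gamma}(T_\epsilon,z_\epsilon)\to y$ forces $\varphi(\phi_{\Gamma}(T_\epsilon,z_\epsilon))\to \varphi(y)<\varphi(p)$. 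For small enough $\epsilon$ these estimates are incompatible, which is the desired contradiction.

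The main obstacle I expect is the transversality/inclination step: guaranteeing we may select $x^*$ so that $W^{\tn{s}}(p)\pitchfork \varphi^{-1}(c^*)$ at $x^*$, and then applying a quantitative version of the $\lambda$-lemma strong enough to pin the orbit through $z_\epsilon$ close to a prescribed point of $W^{\tn{u}}(p)$. A secondary difficulty is the global upgrade of Proposition~\ref{prop:prop2}: the orbit of $x^*$ may traverse finitely many critical values of $\varphi$ strictly between $c^*$ and $\varphi(p)$ before asymptoting, and at each such crossing the local extension of equality of $\varphi$-evolutions must be patched, either by a continuity argument or by invoking Proposition~\ref{prop:completeness_of_partition} on a sufficiently fine refinement of the partition.
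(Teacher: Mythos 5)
Your proposal is correct in outline but takes a genuinely different route from the paper's own (sketch of) proof. The paper argues structurally: it splits on whether some regular level set $R(x)=\varphi^{-1}(\varphi(x))$ through $x\in W^{\tn{s}}(p)$ escapes $W^{\tn{s}}(p)$; in the nontrivial case it uses compactness (every orbit limits to a singular point, there being no limit cycles) to produce a second singular point $p'$ with $\varphi(p')=\varphi(p)$, then propagates constancy of $\varphi$ along chains $p=p_1\preceq p_2\preceq\dots\preceq p_n$ of heteroclinic connections and writes $W^{\tn{u}}(p)$ as the union of the connecting manifolds $W(p,p')=W^{\tn{u}}(p)\cap W^{\tn{s}}(p')$; in the remaining case ($Y(x)=\emptyset$ for all $x$) it shows $\dim W^{\tn{s}}(p)=n$, so $W^{\tn{u}}(p)=\{p\}$. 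You instead run a single local contradiction at $p$ via the $\lambda$-lemma: a $u$-dimensional disk $D\subseteq\varphi^{-1}(c^*)$ transverse to $W^{\tn{s}}(p)$ accumulates on $W^{\tn{u}}(p)$ under the flow, and the equal-$\varphi$-evolution consequence of completeness pins $\varphi\geq\varphi(p)$ along orbits issuing from $D$, contradicting $\varphi(y)<\varphi(p)$. What each buys: your argument avoids the paper's global Morse-like decomposition of $W^{\tn{u}}(p)$ (which tacitly assumes every point of $W^{\tn{u}}(p)$ lies on a connection to some lower singular point, plus a closure step the sketch glosses over) and subsumes the paper's second case trivially (if $u=0$ no candidate $y$ exists); the price is that the $\lambda$-lemma needs $p$ hyperbolic, which is implicit in the paper's standing assumption of non-degenerate singularities and its use of $W^{\tn{s}}$ and $W^{\tn{u}}$. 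Your two flagged obligations are real but dischargeable. For transversality: since $h(t)=\varphi(\phi_{\Gamma}(t,x^*))$ is $C^1$, nonincreasing, and nonconstant (as $c^*>\varphi(p)$), there is $\tau$ with $\psi(\phi_{\Gamma}(\tau,x^*))<0$ and $h(\tau)$ a regular value (Sard); there $f$ is not tangent to the level set, so $T W^{\tn{s}}(p)\not\subseteq T\varphi^{-1}(h(\tau))$ and your dimension count produces the disk. For the bootstrap of Proposition~\ref{prop:prop2}, monotonicity is cleaner than an open/closed argument: completeness (Proposition~\ref{prop:completeness_of_partition}, imposed for every choice of level sets) gives a common transit time $t_b$ from $\varphi^{-1}(c^*)$ to each regular level $b$, and if $\varphi(\phi_{\Gamma}(t^*,z))>\varphi(\phi_{\Gamma}(t^*,z'))$ for some $z,z'\in\varphi^{-1}(c^*)$, choosing a regular $b$ strictly between these values (regular values are dense) forces $t_b>t^*$ and $t_b<t^*$ simultaneously. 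Finally, note that only $\varphi(\phi_{\Gamma}(T_\epsilon,z_\epsilon))\geq\varphi(p)$ is needed for your contradiction, so the possibility that the common evolution reaches the critical value $\varphi(p)$ in finite time causes no harm.
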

\begin{sketchProof}
Let $Z\equiv\varphi^{-1}(\tn{Reg}(\varphi))\cap W^{\tn{s}}(p)$ and let $R(x)\equiv\varphi^{-1}(\varphi(x))$. Suppose that there exists a point $x\in Z$ such that $Y(x)\equiv R(x)\backslash W^{\tn{s}}(p)\neq\emptyset$.

Per definition of $W^{\tn{s}}(p)$, $\lim_{t\rightarrow\infty}\phi(x,t)= p$. Let $y\in Y(x)$ then there exists a singular point $p'\neq p$ such that $y\in W^{\tn{s}}(p')$. Note that any solution goes to a singular point, as the state space \sts is compact.

Since $\varphi(x)=\varphi(y)$, the abstraction is complete, and the flow map $\phi$ is continuous, we conclude from Proposition~\ref{prop:prop2} that $\varphi(p)=\varphi(p')$. Thus, for any $z\in W(p,p')\equiv W^{\tn{u}}(p)\cap W^{\tn{s}}(p')$ the value of $\varphi$ is constant, i.e., $\varphi(p)=\varphi(z)$. Therefore, $W(p,p')\subseteq R(p)$.

To show that $W^{\tn{u}}(p)\subseteq R(p)$, we exploit that any $z\in R(x)$, $z\in W^{\tn{s}}(p'')$ for some singular point $p''$. Therefore, $\varphi(p)=\varphi(p'')$ for any such singular point.
We use notation $p\preceq p'$ iff there is a flow line from $p$ to $p'$. Whenever there is a sequence $\{p_1,\dots,p_n\}$ of singular points of the vector field $f$ such that
\[p=p_1\preceq p_2\preceq\dots\preceq p_n,\]
then $\varphi(p_i)=\varphi(p)$ for all $i\in\{1,\dots,n\}$.
Such a scenario is illustrated in Figure~\ref{fig:multi_sing_proof}.
\begingroup%
  \makeatletter%
  \providecommand\color[2][]{%
    \errmessage{(Inkscape) Color is used for the text in Inkscape, but the package 'color.sty' is not loaded}%
    \renewcommand\color[2][]{}%
  }%
  \providecommand\transparent[1]{%
    \errmessage{(Inkscape) Transparency is used (non-zero) for the text in Inkscape, but the package 'transparent.sty' is not loaded}%
    \renewcommand\transparent[1]{}%
  }%
  \providecommand\rotatebox[2]{#2}%
  \ifx\svgwidth\undefined%
    \setlength{\unitlength}{396bp}%
    \ifx\svgscale\undefined%
      \relax%
    \else%
      \setlength{\unitlength}{\unitlength * \real{\svgscale}}%
    \fi%
  \else%
    \setlength{\unitlength}{\svgwidth}%
  \fi%
  \global\let\svgwidth\undefined%
  \global\let\svgscale\undefined%
  \makeatother%
  \begin{figure}$ $%
    \put(0,0){\includegraphics[width=\unitlength]{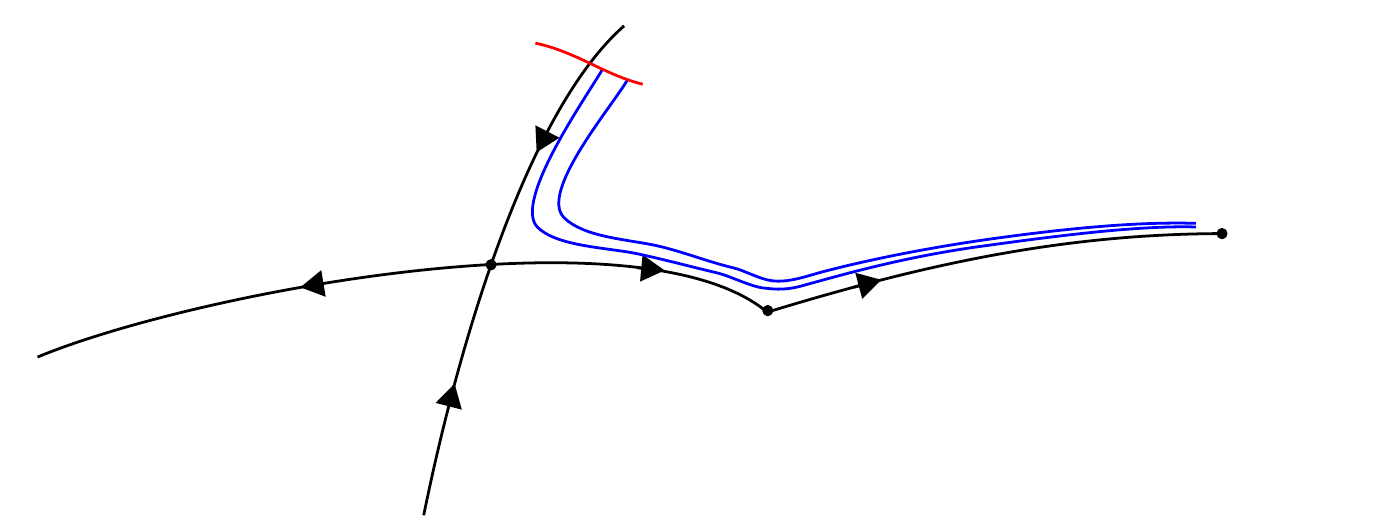}}%
    \put(0.45894143,0.35794543){\color[rgb]{0,0,0}\makebox(0,0)[lb]{\smash{$W^{\tn{s}}(p_1)$}}}%
    \put(0.35988045,0.16176776){\color[rgb]{0,0,0}\makebox(0,0)[lb]{\smash{$p_1$}}}%
    \put(0.56116808,0.12436601){\color[rgb]{0,0,0}\makebox(0,0)[lb]{\smash{$p_2$}}}%
    \put(0.88746961,0.18442785){\color[rgb]{0,0,0}\makebox(0,0)[lb]{\smash{$p_3$}}}%
    \put(0.47248211,0.31349297){\color[rgb]{0,0,0}\makebox(0,0)[lb]{\smash{$\varphi^{-1}(\varphi(x))$}}}%

\caption{State space of a system with singular points $p$, $q_1$, and $q_2$. No solution initialized on the red level set reaches $q_1$, but solutions get arbitrarily close to $q_1$.}
\label{fig:multi_sing_proof}
  \end{figure}%
\endgroup
Thus, we have
\[W^{\tn{u}}(p)=\bigcup_{p\preceq\dots\preceq p'}W(p,p')\subseteq R(p).\]

If $Y(x)=\emptyset$ for all $x\in Z$, then
\[\bigcup_{x\in Z} R(x)\subseteq W^{\tn{s}}(p).\]
The dimension of $R(x)$ is $n-1$. Furthermore, since $R(x)$ is a closed compact manifold and the vector field is transversal to $R(x)$, there exists $\tau>0$ such that
\[\phi(R(x),(-\tau,\tau))\]
that is an embedded manifold of dimension $n$. Thereby $\tn{dim}(W^{\tn{s}}(p))=n$, and $W^{\tn{u}}(p)=\{p\}\subseteq R(p)$.
\qed
\end{sketchProof}
One could think that the inclusion $W^{u}(p)\subseteq \varphi^{-1}(\varphi(p))$ in Theorem~\ref{thm:res} could be replaced by an equality; however, the following counterexample demonstrates that this cannot be done.
\begin{example}
Consider the following two dimensional linear vector field from Example~\ref{ex:saddle_lin}
\begin{align}
\begin{bmatrix}
\dot{x}_1\\
\dot{x}_2
\end{bmatrix}&=
\begin{bmatrix}
-x_1\\
x_2
\end{bmatrix}.
\end{align}
The system has a saddle point $p=(0,0)$ and its phase plot is shown in Figure~\ref{fig:example1_saddle_phase_plot}. We modify the partitioning function $\varphi_{1}(x_1,x_2)=x_1^2$ from Example~\ref{ex:saddle_lin}, to obtain a case where
\[W^{u}(p)\subsetneq\tilde{\varphi}_1^{-1}(\tilde{\varphi}_1(p)).\]
For this purpose, we construct a bump function, according to \cite{An_Introduction_to_Manifolds}. Let
\begin{align}
\notag f(t)&=\begin{cases}e^{-1/t}&\text{for }t>0\\0&\text{for }t\leq0\end{cases}\\
\notag g(t)&=\frac{f(t)}{f(t)+f(1-t)}.
\end{align}
From $f$ and $g$, we choose the partitioning function
\begin{align}
\tilde{\varphi}_{1}(x_1,x_2) &= g\left(\frac{x_1^2-a^2}{b^2-a^2}\right)\cdot x_1^2\label{eqn:bump_varphi1},
\end{align}
with $a=0.5$ and $b=2$. The graphs of $g\left(\frac{x_1^2-a^2}{b^2-a^2}\right)$, $\tilde{\varphi}_{1}(x_1,x_2)$, and $\varphi_{1}(x_1,x_2)$ are shown in Figure~\ref{fig:bump_example}.
\begin{figure}[!htb]
    \centering
       \includegraphics[scale=1]{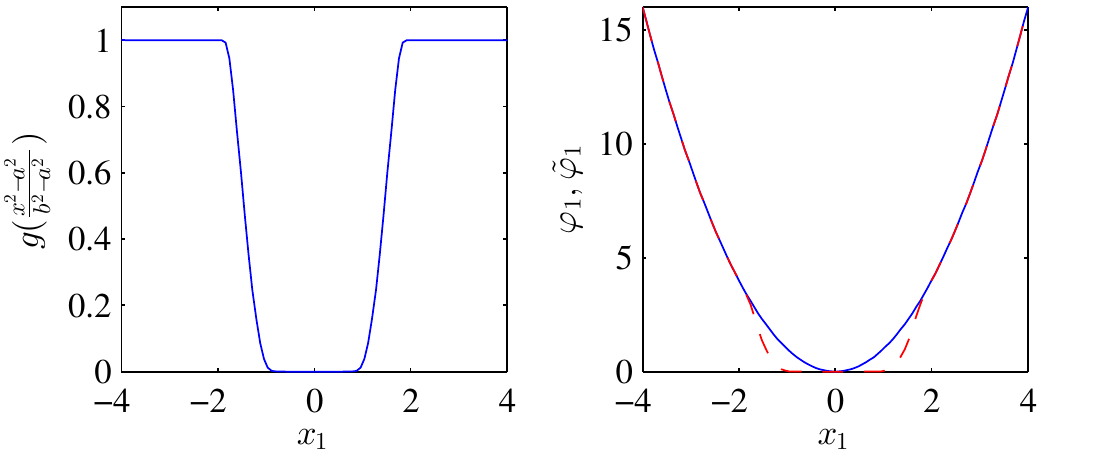}
    \caption{The left subplot shows the graph of $g\left(\frac{x_1^2-a^2}{b^2-a^2}\right)$ and the right subplot shows the graphs of $\varphi_{1}(x_1,x_2)$  (blue) and $\tilde{\varphi}_{1}(x_1,x_2)$ (dashed red).\label{fig:bump_example}}
\end{figure}

The partitioning function shown in \eqref{eqn:bump_varphi1} generates a complete abstraction and for this particular function $W^{u}(p)$ is a proper subset of $\tilde{\varphi}_1^{-1}(\tilde{\varphi}_1(p))$.
\end{example}

Theorem~\ref{thm:res} is vital in the field of abstracting generic dynamical systems, as it shows that a partitioning function generating a complete abstraction has to be constant on the unstable manifolds; hence, finding such a function is as difficult as finding the stable and unstable manifolds. This is practically impossible for systems of dimension greater than three. As an example, on all black lines in Figure~\ref{fig:morse_complex} a complete partitioning function must be constant.
\begin{figure}[!htb]
    \centering
       \includegraphics[scale=0.8]{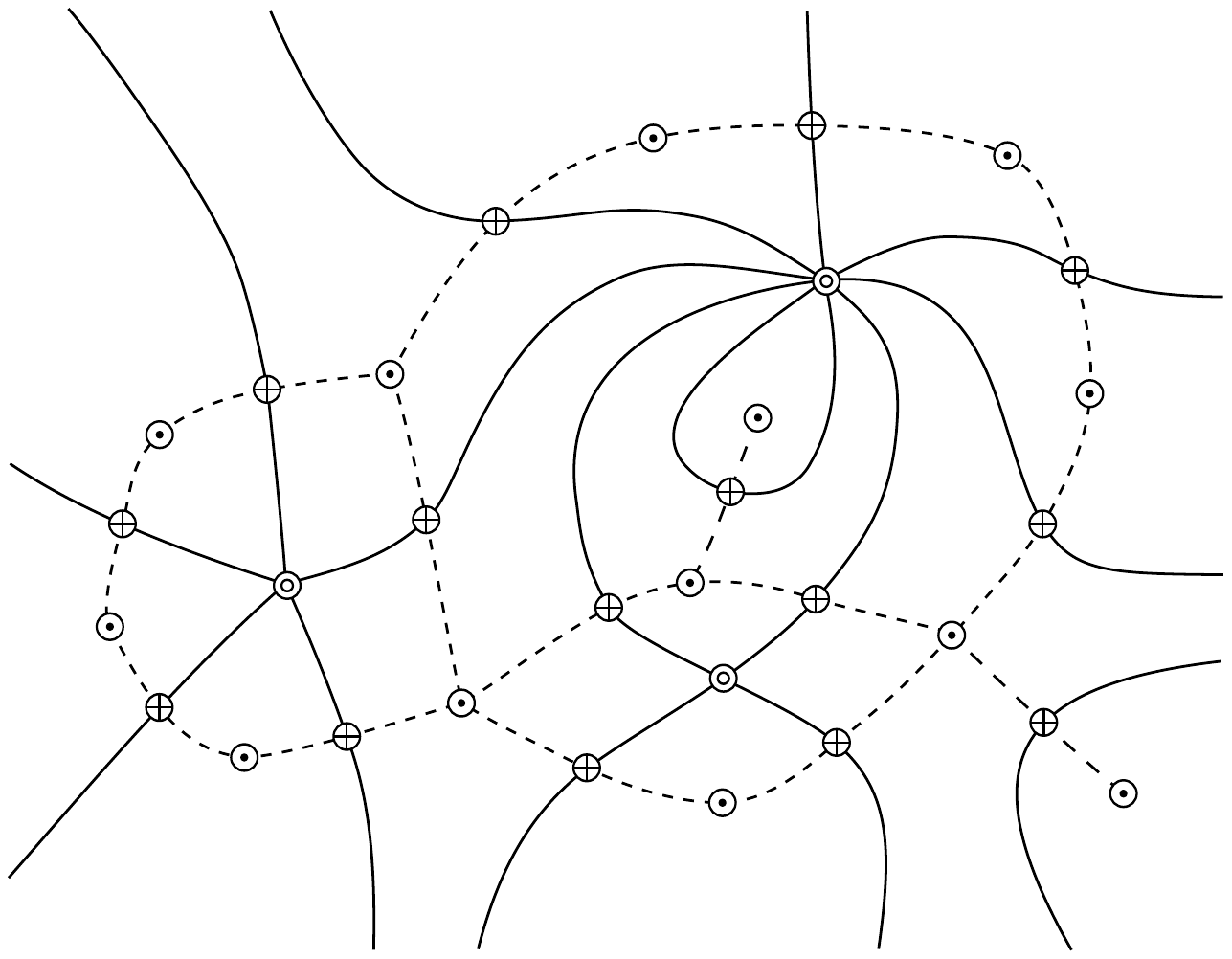}
    \caption{Morse complex, where a "+" indicates a saddle, a "$\cdot$" indicates a maximum, and a "$\circ$" indicates a minimum. The lines connecting the singular points are stable and unstable manifolds.\label{fig:morse_complex}}
\end{figure}
In conclusion, research in the field should be focused on method development for generating sound rather than complete abstractions.

\section{Conclusion}\label{sec:conclusion}
This paper highlights some issues with the use of transversal partitions. Complete abstractions of general dynamical systems cannot be generated using transversal partitioning functions. This issue arises as a partitioning function generating a complete abstraction has to be constant on the unstable manifolds.

The identification of stable and unstable manifolds is a very difficult problem. Therefore, it will not be possible to derive algorithms for generating complete abstraction of generic dynamical systems.

We have derived a theorem stating that the set of critical points of partitioning functions generating a complete abstraction must contain either the stable or unstable manifold of the dynamical system. This makes algorithmic generation of complete abstractions practically impossible for general dynamical systems.

\bibliographystyle{eptcs}
\bibliography{bibliography}

\appendix
\section{Timed Automata Details}\label{app:ta}
%

The semantics of a timed automaton is defined in the following.
\begin{definition}[Clock Valuation]\label{def:clock_valuation}
A clock valuation on a set of clocks $C$ is a mapping $v:C\rightarrow\mathds{R}_{\geq0}$. The initial valuation $v_{0}$ is given by $v_{0}(c)=0$ for all $c\in C$. For a valuation $v$, a scalar $d\in\mathds{R}_{\geq0}$, and $R\subseteq C$, the valuations $v+d$ and $v[R]$ are defined as
\begin{subequations}
\begin{align}
(v+d)(c) &= v(c)+d,\label{eqn:clock_valuation_delay}\\
v[R](c) &= \begin{cases}0&\text{for }c\in R,\\v(c)&otherwise.\end{cases}\label{eqn:clock_valuation_reset}
\end{align}
\end{subequations}
\end{definition}
It is seen that \eqref{eqn:clock_valuation_delay} is used to progress time and \eqref{eqn:clock_valuation_reset} is used to reset the clocks in the set $R$ to zero.

We denote the set of maps $v:C\rightarrow\mathds{R}_{\geq0}$ by $\mathds{R}^{C}_{\geq0}$.
\begin{remark}\label{rem:topology_valuation}
This notation indicates that we identify a valuation $v$ with $C$-tuples of nonnegative reals in $\mathds{R}_{\geq0}^{|C|}$, where $|C|$ is the number of elements in $C$. We impose the Euclidian topology on $\mathds{R}^{C}_{\geq0}$.
\end{remark}

\begin{definition}[Semantics of Clock Constraint]
A clock constraint in $\Psi(C)$ is a set of clock valuations $\{v:C\rightarrow\mathds{R}_{\geq0}\}$ given by
\begin{subequations}
\begin{align}
\llbracket c\mathbf{\bowtie} k\rrbracket &= \{v:C\rightarrow\mathds{R}_{\geq0}|v(c)\bowtie k\}\\
\llbracket\psi_{1}\bm{\wedge}\psi_{2}\rrbracket &= \llbracket\psi_{1}\rrbracket\cap\llbracket\psi_{2}\rrbracket.
\end{align}
\end{subequations}
\end{definition}
For convenience, we denote $v\in\llbracket\psi\rrbracket$ by $v\models\psi$ and denote the transition $(e,v,\sigma,e',v')$ by $(e,v)\overset{\sigma}{\rightarrow}(e',v')$ in the following.

\begin{definition}[Semantics of Timed Automaton]\label{def:sem_TA}
The semantics of a timed automaton given by the tuple $\TA=(E, E_{0}, C, \Sigma, I, \Delta)$ is the transition system $\TAsem=(S,S_{0},\Sigma\cup\mathds{R}_{\geq0}, T_{\tn{s}}\cup T_{\tn{d}})$, where $S$ is the set of states
\begin{align}
\notag S &= \{(e,v)\in E\times\mathds{R}^{C}_{\geq0}|~v\models I(e)\},
\end{align}
$S_0 \subseteq S$ is the set of initial states
\[S_{0} = \{(e,v)\in E_{0}\times \mathds{R}^{C}_{\geq0}|~v = v_0\}.\]
Note that $E\times\mathds{R}^{C}_{\geq0}$ induces subspace topology on $S$.

$T_{\tn{s}}\cup T_{\tn{d}}$ is the union of the following sets of transitions
\begin{align}
\notag T_{\tn{s}} &= \{(e,v)\overset{\sigma}{\rightarrow}(e',v')|~\exists (e,G_{e \rightarrow e'},\sigma,R_{e\rightarrow e'},e')\in\Delta 
\text{ such that }
v\models G_{e\rightarrow e'} \text{ and }v'=v[R_{e\rightarrow e'}]\},\\
\notag T_{\tn{d}} &= \{(e,v)\overset{d}{\rightarrow}(e,v+d)|~\forall d'\in[0,d] : v+d'\models I(e)\}.
\end{align}
\end{definition}
Hence, the semantics of a timed automaton is a transition system that comprises an infinite number of states: product of $E$ and $\mathds{R}^{C}_{\geq0}$ and two types of transitions: the transition set $T_{\tn{s}}$ between discrete states with possibly a reset of clocks belonging to a subset $R_{e\rightarrow e'}$, and the transition set $T_{\tn{d}}$ that corresponds to time passing within the invariant $I(e)$.

In the following, we define an analog to the solution of a dynamical system for a timed automaton.
\begin{definition}[Run of Timed Automaton]\label{def:run_of_timed_automaton}
A run of a timed automaton \TA is a possibly infinite sequence of alternations between time steps and discrete steps of the following form
\begin{align}
(e_{0},v_{0})\overset{d_{1}}{\longrightarrow}(e_{0},v_{1})\overset{\sigma_{1}}{\longrightarrow}(e_{1},v_{2})\overset{d_{2}}{\longrightarrow}\dots,
\label{eq:RunTimedAutomaton}
\end{align}
where $d_{i}\in \mathds{R}_{\geq0}$ and $\sigma_{i}\in \Sigma$.
\end{definition}
By forcing alternation of time and discrete steps in Definition~\ref{def:run_of_timed_automaton}, the time step $d_i$ is the maximal time step between the discrete steps $\sigma_{i-1}$ and $\sigma_{i}$. 

\subsection{Trajectory of a Timed Automaton}
A vital object for studying the behavior of any dynamical system is its trajectory. Therefore, we define a trajectory of a timed automaton \cite{Complete_Abstractions_of_Dynamical_Systems_by_Timed_Automata}. At the outset, we introduce a concept of a time domain.

In the following, we denote sets of the form
$\{a,\dots\}$ with $~a\in\mathds{Z}_{\geq0}$ as $\{a,\dots,\infty\}$. Let $k\in \mathds{N}\cup
\{\infty\}$; a
subset $\mathcal{T}_k \subset \mathds Z_{\geq0} \times \mathds R_{\geq0}$ with disjoint (union) topology will be
called a time domain if there exists an increasing sequence
$\{t_i\}_{i\in \{0,\hdots, k\}}$ in $\mathds R_{\geq0}\cup\{\infty\}$ such
that
\[
\mathcal{T}_k = \bigcup_{i \in \{1,\hdots, k\}}\{i\} \times T_i,
\]
where 
\[T_i = \left\{\begin{matrix}[t_{i-1}, t_{i}] & \hbox{ if } t_i <
    \infty \\ [t_{i-1},\infty[ & \hbox{ if } t_i =
    \infty.  \end{matrix} \right.\]

Note that $T_i= [t_{i-1}, t_{i}]$ for all $i$ if $k=\infty$.  We
say that the time domain is infinite if $k = \infty$ or $t_k =
\infty $. The sequence $\{t_i\}_{i\in \{0,\hdots, k\}}$ corresponding
to a time domain will be called a switching sequence.

We define two projections $\pi_1: E \times  \mathds{R}^{C}_{\geq0} \to E$ and $\pi_2: E \times  \mathds{R}^{C}_{\geq0} \to \mathds{R}^{C}_{\geq0}$ by  $\pi_1(e,v) = e$ and $\pi_2(e,v) = v$.

\begin{definition}[Trajectory]\label{def:trajectory}
A trajectory of the timed automaton \TA is a   pair $({\cal T}_k,\gamma)$ where $k\in \mathds{N}\cup \{\infty\}$ is fixed, and
\begin{itemize}
\item $\mathcal{T}_k \subset \mathds Z_{\geq0} \times \mathds R_{\geq0}$ is a time domain with corresponding switching sequence $\{t_i\}_{i\in \{0,\hdots, k\}}$,
\item $\gamma:\mathcal{T}_k \to S$ and recall that $S$ is the (topological) space of joint continuous and discrete states, see Definition~\ref{def:sem_TA} and Remark~\ref{rem:topology_valuation}. The map $\gamma$ satisfies:
\begin{enumerate}
\item For each $i \in \{1, \hdots, k-1\}$, there exists $\sigma \in \Sigma$ such that
\[\gamma(i,t_i) \overset{\sigma}{\longrightarrow} \gamma(i+1,t_i) \in T_{\tn{s}}.\]
\item Let $\bm{0}$ be a vector of zeros and $\bm{1}$ be a vector of ones in $\mathds{R}^{C}$. 
For each $i \in \{1, \hdots, k\}$
\begin{align}\notag\pi_2(\gamma(i,t_{i-1}+d))=\pi_2(\gamma(i,t_{i-1}))+d\bm{1}~~~~\forall d\in\begin{cases}[0,\,t_{i}-t_{i-1}]&\tn{ if } t_i <
    \infty\\
    [0,\,\infty[&\tn{ if } t_i =
    \infty\end{cases},\end{align}
    where $\pi_2(\gamma(i,t_{i-1}+d))\in \llbracket I(\pi_1(\gamma(i,t_i))\rrbracket$ and $\pi_2(\gamma(1,t_{0}))=\bm{0}$.
    (Item 2 ensures that the time derivative of the valuation of each clock is one, between the discrete transitions.)
\end{enumerate}
\end{itemize}
Note that $\gamma$ is continuous by construction. Recall the definition of $v_{0}$ from Definition~\ref{def:clock_valuation}. A trajectory at $(e,v_{0})$ (with $v_{0} \models I(e)$ ) is a trajectory $({\cal T}_k,\gamma)$ with $\gamma(1,t_0)=(e,v_{0})$.
\end{definition}

We define a discrete counterpart of the flow map.

\begin{definition}[Flow Map of Timed Automaton]
The flow map of a timed automaton \TA is a multivalued map
\[\phi_{\TA}:\mathds{R}_{\geq0}\times S_{0}\rightarrow 2^{S},\] defined by $ (e',v') \in\phi_{\TA}(t;e,v_{0})$  if and only if there exists a trajectory $(\mathcal T_k,\gamma)$ at $(e,v_{0})$ such that $t = t_{k}-t_0$ and $(e',v') = \gamma(k,t_k)$.
\end{definition}

It will be instrumental to define  a discrete flow map $\Phi_{\TA}:\mathds{R}_{\geq0}\times E_{0}\rightarrow 2^{E}$, which forgets the valuation of the clocks
\begin{align}
\Phi_{\TA}(t,e) = \pi_1 \circ \phi_{\TA}(t;e,v_0)
\end{align}

In other words, $\Phi_{\TA}$ is defined by: $e' \in\Phi_{\TA}(t,e)$ if and only if there exists a run \eqref{eq:RunTimedAutomaton} of  $\TAsem$ initialized in $(e,v_{0})$ that reaches the location $e'$ at time $t=\sum_{i} d_{i}$.


The reachable set of a timed automaton is defined as follows.
\begin{definition}[Reachable set of Timed Automaton]
The reachable locations of a system \TA from a set of initial locations $E_{0}\subseteq E$ on the time interval $[t_{1},t_{2}]$ is defined as
\begin{align}
\Phi_{\TA}([t_1, t_2], E_0) \equiv \bigcup_{(t,e) \in [t_1, t_2] \times E_0} \Phi_{\TA}(t, e).
\end{align}
\end{definition}


\end{document}